\documentclass[11pt,a4paper]{article}

\usepackage[utf8]{inputenc}
\usepackage[margin=1in]{geometry}
\usepackage{amsmath,amssymb,amsfonts}
\usepackage{amsthm}
\usepackage{graphicx}
\usepackage{textcomp}
\usepackage{xcolor}
\usepackage{enumitem}
\setlist[enumerate]{label=\arabic*)}
\usepackage[colorlinks=true,linkcolor=blue,citecolor=blue,urlcolor=blue]{hyperref}

\newtheorem{theorem}{Theorem}
\newtheorem{proposition}[theorem]{Proposition}

\newtheorem{corollary}[theorem]{Corollary}
\theoremstyle{definition}
\newtheorem{definition}[theorem]{Definition}
\newtheorem{example}[theorem]{Example}
\theoremstyle{remark}
\newtheorem{remark}[theorem]{Remark}

\newcommand{\A}{\mathcal{A}}
\newcommand{\C}{\mathcal{C}}
\newcommand{\D}{\mathcal{D}}
\newcommand{\E}{\mathcal{E}}
\newcommand{\R}{\mathcal{R}}
\newcommand{\U}{\mathcal{U}}

\newcommand{\mF}{\mathcal{F}}
\newcommand{\F}{\mathbb{F}}
\newcommand{\Fq}{\mathbb{F}_q}
\newcommand{\Fqm}{\mathbb{F}_{q^m}}
\newcommand{\Fqmu}{\mathbb{F}_{q^\mu}}
\newcommand{\Fqmn}{\mathbb{F}_q^{m\times n}}
\newcommand{\Gab}{\mathbf{Gab}}
\DeclareMathOperator{\rk}{rk}
\DeclareMathOperator{\wt}{wt}
\newcommand{\addremove}{\mathsf{AddRemove}}
\newcommand{\miranda}{\mathsf{MIRANDA}}
\newcommand{\PsiG}{\Psi_{\gamma}}
\newcommand{\drk}{\textnormal{d}_{\mathrm{rk}}}

\DeclareMathOperator{\dd}{d}

\title{Code distances of matrix rank-metric codes under Add-and-Remove transformations}

\author{
Gianira N. Alfarano\\
{\small Universit\'e de Rennes, IRMAR, Rennes, France}\\
{\small\texttt{gianira-nicoletta.alfarano@univ-rennes.fr}}
\and
Rakhi Pratihar\\
{\small School of Mathematical \& Statistical Sciences}\\
{\small Indian Institute of Technology Mandi, Mandi, India}\\
{\small\texttt{rakhi@iitmandi.ac.in}}
\and
Adrien Vin\c{c}otte\\
{\small Universit\'e de Rennes, IRMAR, Rennes, France}\\
{\small\texttt{adrien.vincotte@univ-rennes.fr}}
}

\date{}

\begin{document}

\maketitle

{\renewcommand{\thefootnote}{}\footnotetext{G.N.A. is supported by the Agence Nationale de la Recherche through grant number ANR-24-CPJ1-0075-01. R.P. has been supported by the ANR-DFG through grant number ANR-22-CE92-0075. A.V. acknowledges the support of the DGA and the CreachLab.}}

\begin{abstract}
For a matrix rank-metric code $\C$ and an integer $i$, the $i$-th subcode distance of $\C$ is the largest minimum rank distance of an $i$-dimensional subcode of $\C$. We study how subcode distances change under the Add-and-Remove construction, in which a subcode of codimension $\ell_s$ is kept and a space of dimension $\ell_a$ is added; this construction is used, with Gabidulin codes, in the $\miranda$ signature scheme. For arbitrary matrix codes, we give inequalities between the subcode distances of the original and of the new code, and we show that Delsarte duality interchanges $\ell_s$ and $\ell_a$. For matrix Gabidulin codes, whose subcode distances are given by an explicit formula, we show that, if $\ell_s<\max\{m,n\}$, then within each block of indices $(a-1)\max\{m,n\}+1,\ldots,a\max\{m,n\}$ not exceeding $\dim_{\Fq}(\C_s)$, all but the last $\ell_s$ subcode distances are the same as for the Gabidulin code, and the remaining ones decrease by at most one; a small example shows that such a decrease can occur. We also express the minimum distance of the new code in terms of the rank distance between the added space and the common subcode. For $\miranda$, the lower bound on the minimum distance of the dual public code used by the known distinguisher when $\ell_s=0$ is a special case of these results; for the proposed parameters, where $\ell_s>0$, the same bound holds for a subcode of codimension $\ell_s$ of the dual public code.
\end{abstract}

\noindent\textbf{Keywords:} rank-metric codes, code distances, Gabidulin codes, Add-and-Remove, code-based cryptography

\section{Introduction}
Rank-metric codes, introduced by Delsarte and Gabidulin~\cite{Delsarte78,Gabidulin85}, are classical objects in coding theory, with applications to random network coding~\cite{SilvaKK08} and to code-based cryptography~\cite{GabidulinPT91,GaboritZ16,Loidreau17}; see also the surveys~\cite{BartzEtAl22,GorlaR18}.  A central challenge in rank-metric cryptography is designing public codes that hide an exploitable algebraic structure while preserving good distance properties. The \emph{Add-and-Remove} construction of~\cite{CouvreurDGV26} produces such a code by keeping a large subcode of a structured code and adding a space that intersects the original code trivially; it is the basis of the $\miranda$ signature scheme~\cite{CouvreurDGV26}. It is then natural to ask how much of the structure of the original code is still visible in the new one.

The minimum distance alone does not answer this question. Finer invariants for rank-metric codes have been studied: generalized and relative generalized rank weights~\cite{Ravagnani16gw,KuriharaMU15}, $q$-polymatroidal invariants~\cite{GorlaJLR20}, and invariants related to rank-metric equivalence~\cite{NeriPH20}. Recently, Camps-Moreno, Gorla, and L\'opez introduced \emph{code distances}, obtained by optimizing minimum distance over subcodes or supercodes of prescribed dimension~\cite{CampsMorenoGL25}. In particular, the subcode distances $\alpha_i$, where $\alpha_i$ is the largest minimum distance of an $i$-dimensional subcode, can distinguish some inequivalent MRD codes having the same parameters~\cite{CampsMorenoGL25}. They are therefore natural candidates for detecting hidden structure.

In this paper we determine which subcode distances are necessarily preserved by the Add-and-Remove construction. Starting from a matrix code $\C$, the construction keeps a subcode $\C_s\leq\C$ of codimension $\ell_s$ and adds a subspace $\A$ of dimension $\ell_a$ with $\A\cap\C=\{0\}$, yielding $\D = \C_s \oplus \A$. In the version of $\miranda$ based on Gabidulin codes, this is used to hide the extension-field structure of a Gabidulin code~\cite{CouvreurDGV26}. We use repeatedly the fact that $\C\cap\D=\C_s$, so that $\ell_s$ and $\ell_a$ are the codimensions of $\C\cap\D$ in $\C$ and in $\D$, respectively.

Our contributions are as follows.

\noindent\textbf{General matrix codes.}
We give inequalities between the subcode distances of $\C$, $\C_s$ and $\D$, and we show that the dual code $\D^\perp$ is obtained from the dual code $\C^\perp$ by the same construction with $\ell_s$ and $\ell_a$ interchanged.

\noindent\textbf{Matrix Gabidulin codes.}
Using the chain of Gabidulin subcodes, we compute all subcode distances of a Gabidulin code. For the Add-and-Remove code, the inequalities above then become equalities for most indices. More precisely, if $\ell_s<\mu=\max\{m,n\}$, then for $i\le\dim_{\Fq}(\C_s)$ the value $\alpha_i(\D)$ can differ from $\alpha_i(\C)$ only if $i$ is one of the last $\ell_s$ indices among $(a-1)\mu+1,\ldots,a\mu$ for some $a$, and in that case it is smaller by at most one. For instance, if $\ell_s=1$, then $\alpha_i(\D)=\alpha_i(\C)$ whenever $\mu\nmid i$. A small example shows that the decrease by one does occur.

\noindent\textbf{Minimum distance and the MRD property.}
The minimum distance $\dd(\D)=\alpha_{\dim\D}(\D)$ also depends on the position of $\A$ with respect to $\C_s$. We express $\dd(\D)$ in terms of the rank distance between $\A$ and $\C_s$ and characterize when $\D$ is MRD. We also consider the case where $\C_s$ is the Gabidulin code of $\Fqmu$-dimension one less; then all subcode distances $\alpha_i(\D)$ with $i\le\dim_{\Fq}(\C_s)$ attain the Singleton bound.

\noindent\textbf{Application to MIRANDA.}
Rank-metric cryptosystems built from Gabidulin codes have a long history~\cite{GabidulinPT91,Loidreau17}, and their algebraic structure has also been the source of structural attacks~\cite{Overbeck08}. In the special case $\ell_s=0$, the lower bound on the minimum distance of the dual public code used in the distinguisher of~\cite{CouvreurDGV26} is a special case of our bounds. For the proposed parameter sets, where $\ell_s>0$, the same bound holds for a subcode of codimension $\ell_s$ of the dual public code, and we obtain similar bounds for the other subcode distances.

Several questions remain open: how often the decrease at the exceptional indices occurs, in particular for the $\miranda$ parameters, what happens for MRD codes that are not Gabidulin codes, and whether subcode distances can be used in a polynomial-time distinguisher. The last question requires knowledge of the subcode distances of random matrix codes, which we leave for future work.

\section{Preliminaries}\label{sec:preliminaries}

Throughout, $q$ is a prime power, all matrix codes are $\Fq$-linear subspaces of $\Fqmn$, and $\mu:=\max\{m,n\}$, $\nu:=\min\{m,n\}$.

\subsection{Matrix rank-metric codes}

For $X,Y\in\Fqmn$, the \textbf{rank distance} between $X,Y$ is defined as $\drk(X,Y):=\rk(X-Y)$,
and the \textbf{rank weight} of $X$ is $\wt_{\mathrm{rk}}(X):=\rk(X)$. A \textbf{matrix rank-metric code} is an $\Fq$-linear subspace $\C\le\Fqmn$. If $\dim_{\Fq}(\C)=k$, we say that $\C$ has parameters $[m\times n,k]_q$. Its minimum rank distance is
$\dd(\C):=\min\{\drk(X,Y):X,Y\in\C,\ X\ne Y\}$. Since $\C$ is linear, $\dd(\C)=\min\{\rk(X):X\in\C\setminus\{0\}\}$.

The parameters of a linear rank-metric code are related by the Singleton-like bound~\cite{Delsarte78,Roth91}, which states that a code $\C\le\Fqmn$ with dimension $k$ and minimum distance $d$ satisfies
\begin{equation}\label{eq:singleton-cardinality}
 k\le
 \max\{m,n\}\bigl(\min\{m,n\}-d+1\bigr)
 ={\mu(\nu-d+1)}.
\end{equation}

Equivalently,
\begin{equation}\label{eq:singleton-distance}
 \dd(\C)\le
 \min\{m,n\}
 -\left\lceil\frac{k}{\max\{m,n\}}\right\rceil+1
 =\nu-\left\lceil\frac{k}{\mu}\right\rceil+1.
\end{equation}
A linear code is called a \textbf{maximum rank distance (MRD) code} if equality holds in \eqref{eq:singleton-cardinality}.
In particular, the dimension of an $\Fq$-linear MRD code is a multiple of $\mu=\max\{m,n\}$.

We use the standard trace inner product $\langle X,Y\rangle:=\operatorname{Tr}(XY^{\mathsf T})$ on $\Fqmn$. The \textbf{(Delsarte) dual} of $\C$ is $\C^\perp:=\{Y\in\Fqmn:\langle X,Y\rangle=0\text{ for every }X\in\C\}$, and $\dim_{\Fq}(\C^\perp)=mn-\dim_{\Fq}(\C)$.

\subsection{Vector rank metric, matrix expansion, and Gabidulin codes}

Vector rank-metric codes over $\Fqmu$ are related to matrix codes via the standard expansion~\cite{Gabidulin85}. For $x=(x_1,\ldots,x_\nu)\in\Fqmu^\nu$, set $\wt_{\mathrm{rk}}(x):=\dim_{\Fq}\langle x_1,\ldots,x_\nu\rangle_{\Fq}$ and $\drk(x,y):=\wt_{\mathrm{rk}}(x-y)$. Fix an ordered $\Fq$-basis $\gamma=(\gamma_1,\ldots,\gamma_\mu)$ of $\Fqmu$. If $x_j=\sum_{r=1}^{\mu}x_{rj}\gamma_r$ with $x_{rj}\in\Fq$, set $M_\gamma(x):=(x_{rj})_{r,j}\in\Fq^{\mu\times\nu}$. To obtain an $m\times n$ matrix, define the \textbf{$\gamma$-expansion map}
\begin{equation}\label{eq:expansion}
 \PsiG:\Fqmu^\nu\longrightarrow\Fqmn,
 \qquad
 \PsiG(x):=
 \begin{cases}
  M_\gamma(x),& m\ge n,\\
  M_\gamma(x)^{\mathsf T},& m<n.
 \end{cases}
\end{equation}
The map \eqref{eq:expansion} is an $\Fq$-linear rank-metric isometry: $\rk(\PsiG(x))=\wt_{\mathrm{rk}}(x)$. Thus an $\Fqmu$-linear code of dimension $\kappa$ expands to an $\Fq$-linear matrix code of dimension $\mu\kappa$ with the same minimum rank distance.

Let $m\geq n$ (the case $m<n$ follows by transposition). A $q$-\textbf{linearized polynomial} over $\Fqm$ has the form $f(x)=\sum_{j=0}^{r} f_j x^{q^j}$ with $f_j\in\Fqm$. If $f\ne0$, its $q$-degree is $\deg_q(f):=\max\{j:f_j\ne0\}$. Such a polynomial induces an $\Fq$-linear map on $\Fqm$.

\begin{definition}
Let $g=(g_1,\ldots,g_n)\in\Fqm^n$ have $\Fq$-linearly independent entries, and let $1\le\kappa\le n$. The \textbf{Gabidulin code} of dimension $\kappa$ evaluated at $g$ is $\Gab(g,\kappa):=\{(f(g_1),\ldots,f(g_n)): \deg_q(f)<\kappa\}$.
\end{definition}

The code $\Gab(g,\kappa)$ has $\Fqm$-dimension $\kappa$ and minimum rank distance $n-\kappa+1$~\cite{Gabidulin85}; hence $\C:=\PsiG(\Gab(g,\kappa))$ has parameters $[m\times n,m\kappa]_q$ and is MRD.

We will use the following chain of Gabidulin subcodes. For $1\le a\le\kappa$, set $\C_a:=\PsiG(\Gab(g,a))$. Then $\C_1\subseteq\C_2\subseteq\cdots\subseteq\C_\kappa=\C$, $\dim_{\Fq}(\C_a)=m a$, and
\begin{equation}\label{eq:gab-subcode}
 \dd(\C_a)=n-a+1.
\end{equation}

\subsection{Add-and-Remove construction}

\begin{definition}[Add-and-Remove construction \cite{CouvreurDGV26}]
Let $\C\le\Fqmn$ be a matrix code of dimension $k$. Let $\C_s\le\C$ have codimension $\ell_s$, and let $\A\le\Fqmn$ have dimension $\ell_a$ and satisfy $\A\cap\C=\{0\}$. The corresponding \textbf{Add-and-Remove code} is $\D:=\C_s\oplus\A$.
It has dimension $\dim_{\Fq}(\D)=k-\ell_s+\ell_a$.
We write $\D\in\addremove(\C,\ell_a,\ell_s)$.
\end{definition}

In $\miranda$, $\C_s$ need only be an $\Fq$-subcode of the matrix Gabidulin code~\cite{CouvreurDGV26}. Moreover, $\C\cap\D=\C_s$: if $X=S+A\in\C\cap\D$ with $S\in\C_s$ and $A\in\A$, then $A=X-S\in\A\cap\C$, so $A=0$. Thus $\ell_s$ and $\ell_a$ are the codimensions of the common subcode in $\C$ and $\D$, respectively. Conversely, for any two matrix codes $\C,\D\le\Fqmn$, taking $\C_s:=\C\cap\D$ and any complement $\A$ of $\C_s$ in $\D$ gives $\A\cap\C=\A\cap\C_s=\{0\}$, so $\D\in\addremove(\C,\ell_a,\ell_s)$ with $\ell_s=\dim_{\Fq}(\C)-\dim_{\Fq}(\C\cap\D)$ and $\ell_a=\dim_{\Fq}(\D)-\dim_{\Fq}(\C\cap\D)$. Hence $\C_s=\C\cap\D$, $\ell_s$ and $\ell_a$ depend only on the pair $(\C,\D)$, not on the choice of~$\A$.

For later use, for $X\in\Fqmn$ and a matrix code $\U\le\Fqmn$, define $\drk(X,\U):=\min_{U\in\U}\rk(X-U)$. If $\A\le\Fqmn$ is nonzero and $\A\cap\U=\{0\}$, set $\Delta(\A,\U):=\min_{0\ne X\in\A}\drk(X,\U)$, and use the convention $\Delta(\{0\},\U)=+\infty$. Equivalently, $\Delta(\A,\U)$ is the minimum rank of an element of $(\A\oplus\U)\setminus\U$; in particular, it depends on $\A$ only through $\A\oplus\U$.

\subsection{Subcode distances}

We now introduce the invariant studied in the paper.

\begin{definition}[Subcode distances \cite{CampsMorenoGL25}]
Let $\C\le\Fqmn$ have dimension $k$. For $1\le i\le k$, the $i$-th \textbf{subcode distance} of $\C$ is $\alpha_i(\C):=\max\{\dd(\E):\E\le\C,\ \dim_{\Fq}(\E)=i\}$.
\end{definition}

Clearly $\alpha_k(\C)=\dd(\C)$ and $\alpha_1(\C)\ge\cdots\ge\alpha_k(\C)$, since passing to a subcode cannot decrease the minimum distance.
Applying the Singleton bound \eqref{eq:singleton-distance} to each $i$-dimensional subcode gives, for $1\le i\le k$,
\begin{equation}\label{eq:alpha-sing}
 \alpha_i(\C)\le \nu-\left\lceil\frac{i}{\mu}\right\rceil+1.
\end{equation}
For matrix Gabidulin codes, equality holds for every $i$. We include the short proof, since this formula is used throughout the paper.

\begin{proposition}\label{prop:gab-alpha}
Let $m\ge n$ and $\C=\PsiG(\Gab(g,\kappa))\le\Fqmn$. Then, for every $1\le i\le m\kappa$,
\begin{equation}\label{eq:gab-alpha}
 \alpha_i(\C)=n-\left\lceil\frac{i}{m}\right\rceil+1.
\end{equation}
\end{proposition}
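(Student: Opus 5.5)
The plan is to prove the two inequalities separately. The upper bound comes for free. Since $m\ge n$, we have $\mu=m$ and $\nu=n$, so \eqref{eq:alpha-sing} gives directly
\[
 \alpha_i(\C)\le n-\left\lceil\frac{i}{m}\right\rceil+1
\]
for every $1\le i\le m\kappa$. All the work is therefore in the lower bound, and for that we only need to exhibit, for each $i$, a single $i$-dimensional subcode of $\C$ whose minimum distance is $n-\lceil i/m\rceil+1$.

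For the lower bound we use the chain of Gabidulin subcodes $\C_1\subseteq\cdots\subseteq\C_\kappa=\C$. Given $i$, set $a:=\lceil i/m\rceil$. Since $1\le i\le m\kappa$, we have $1\le a\le\kappa$, and moreover $m(a-1)<i\le ma=\dim_{\Fq}(\C_a)$. So $\C_a$ contains some $\Fq$-subspace $\E$ with $\dim_{\Fq}(\E)=i$; any $i$-dimensional subspace will do. Passing to a subspace cannot decrease the minimum distance, so \eqref{eq:gab-subcode} yields
\[
 \dd(\E)\ge\dd(\C_a)=n-a+1.
\]
As $\E\le\C$, this shows $\alpha_i(\C)\ge n-\lceil i/m\rceil+1$. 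Combined with the upper bound, this gives \eqref{eq:gab-alpha}.

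The only point that needs care is the index bookkeeping. One must check that $a=\lceil i/m\rceil$ really lies in $\{1,\ldots,\kappa\}$, so that $\C_a$ is defined. One must also check that $i\le\dim_{\Fq}(\C_a)=ma$, so that an $i$-dimensional subspace of $\C_a$ exists. Both follow immediately from the definition of the ceiling and from the range $1\le i\le m\kappa$. Beyond this, there is no real obstacle: the whole argument rests on \eqref{eq:gab-subcode} and the Singleton bound \eqref{eq:alpha-sing}, both of which are already available in the paper.
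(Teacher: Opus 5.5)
Your proof is correct and is essentially the paper's argument. The upper bound is the Singleton estimate \eqref{eq:alpha-sing} with $\mu=m$, $\nu=n$, and the lower bound comes from taking any $i$-dimensional subspace of the Gabidulin subcode $\C_a$ with $a=\lceil i/m\rceil$.
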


\begin{proof}
Fix $1\le i\le m\kappa$ and set $a:=\lceil i/m\rceil\le\kappa$. Then $(a-1)m<i\le am$. By \eqref{eq:gab-subcode}, the Gabidulin subcode $\C_a\le\C$ has dimension $am$ and distance $n-a+1$. Any $i$-dimensional subcode $\E\le\C_a$ satisfies $\dd(\E)\ge \dd(\C_a)=n-a+1$. Hence $\alpha_i(\C)\ge n-a+1$. The opposite inequality is~\eqref{eq:alpha-sing}.
\end{proof}

\section{Subcode Distances under Add-and-Remove}\label{sec:general}

We first give inequalities that hold for arbitrary matrix codes. Their proofs only use the inclusions $\C_s\le\C$ and $\C_s\le\D$ and the codimensions $\ell_s$ and $\ell_a$.

\begin{proposition}\label{prop:general}
Let $\C$ be a matrix code of dimension $k$ and let $\D=\C_s\oplus\A\in\addremove(\C,\ell_a,\ell_s)$. Then
\begin{align}
 \alpha_i(\C_s)&\ge \alpha_{i+\ell_s}(\C),
 &&1\le i\le k-\ell_s,\label{eq:remove-bound}\\
 \alpha_i(\D)&\ge \alpha_i(\C_s),
 &&1\le i\le k-\ell_s,\label{eq:add-lower}\\
 \alpha_i(\D)&\le \alpha_{i-\ell_a}(\C_s),
 &&\ell_a<i\le k-\ell_s+\ell_a.\label{eq:add-upper}
\end{align}
Consequently,
\begin{align}
 \alpha_i(\D)&\ge \alpha_{i+\ell_s}(\C),
 &&1\le i\le k-\ell_s,\label{eq:general-lower}\\
 \alpha_i(\D)&\le \alpha_{i-\ell_a}(\C),
 &&\ell_a<i\le k-\ell_s+\ell_a.\label{eq:general-upper}
\end{align}
\end{proposition}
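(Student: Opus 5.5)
The three basic inequalities all come from one dimension-counting fact: intersecting a subspace with a subspace of small codimension loses at most that codimension. Each inequality therefore reduces to choosing an optimal subcode on one side and intersecting or embedding it on the other. Inequalities \eqref{eq:general-lower} and \eqref{eq:general-upper} then follow by chaining.

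For \eqref{eq:remove-bound}, fix $1\le i\le k-\ell_s$ and pick $\E\le\C$ with $\dim_{\Fq}\E=i+\ell_s$ and $\dd(\E)=\alpha_{i+\ell_s}(\C)$. Since $\C_s$ has codimension $\ell_s$ in $\C$, we have
\[
\dim(\E\cap\C_s)\ge (i+\ell_s)+(k-\ell_s)-k=i .
\]
Choose an $i$-dimensional subspace $\E'\le\E\cap\C_s$. Because $\E'\le\E$, its minimum distance satisfies $\dd(\E')\ge\dd(\E)$. As $\E'$ is an $i$-dimensional subcode of $\C_s$, this gives $\alpha_i(\C_s)\ge\alpha_{i+\ell_s}(\C)$.

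For \eqref{eq:add-lower}, note that $\C_s\le\D$. Every $i$-dimensional subcode of $\C_s$ is therefore an $i$-dimensional subcode of $\D$, and the maximum defining $\alpha_i(\D)$ is taken over a larger family. Hence $\alpha_i(\D)\ge\alpha_i(\C_s)$.

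For \eqref{eq:add-upper}, fix $\ell_a<i\le k-\ell_s+\ell_a$ and let $\E\le\D$ be $i$-dimensional. Since $\C_s$ has codimension $\ell_a$ in $\D$, we get $\dim(\E\cap\C_s)\ge i-\ell_a\ge1$. Take $\E'\le\E\cap\C_s$ of dimension exactly $i-\ell_a$. Then
\[
\dd(\E)\le\dd(\E')\le\alpha_{i-\ell_a}(\C_s).
\]
Maximizing over $\E$ gives \eqref{eq:add-upper}. The hypothesis $i>\ell_a$ is what guarantees $i-\ell_a\ge1$, so that $\alpha_{i-\ell_a}$ is defined.

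For the consequences, \eqref{eq:general-lower} is \eqref{eq:add-lower} combined with \eqref{eq:remove-bound}, both valid for $1\le i\le k-\ell_s$. For \eqref{eq:general-upper}, combine \eqref{eq:add-upper} with the monotonicity fact $\alpha_j(\C_s)\le\alpha_j(\C)$ for $1\le j\le k-\ell_s$. This fact is the same argument as \eqref{eq:add-lower}, using $\C_s\le\C$; apply it with $j=i-\ell_a\le k-\ell_s$.

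There is no real obstacle here. The only care needed is checking that every index stays in range and that the intersections have dimension at least $1$, so that the minimum distances are defined.
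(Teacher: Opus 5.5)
Your proof is correct and follows the paper's argument: the same dimension count for $\E\cap\C_s$ in \eqref{eq:remove-bound} and \eqref{eq:add-upper}, the inclusion $\C_s\le\D$ for \eqref{eq:add-lower}, and chaining with $\C_s\le\C$ for the consequences. The only cosmetic difference is in \eqref{eq:add-upper}: you pass directly to an $(i-\ell_a)$-dimensional subspace of $\E\cap\C_s$, whereas the paper uses the full intersection of dimension $r\ge i-\ell_a$ and then invokes monotonicity of the $\alpha_j$.
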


\begin{proof}
For \eqref{eq:remove-bound}, let $\E\le\C$ have dimension $i+\ell_s$ and satisfy $\dd(\E)=\alpha_{i+\ell_s}(\C)$. Since $\C_s$ has codimension $\ell_s$ in $\C$, the intersection $\E\cap\C_s$ has dimension at least $i$. Choosing an $i$-dimensional subcode $\mF\le\E\cap\C_s$ gives $\alpha_i(\C_s)\ge \dd(\mF)\ge \dd(\E)=\alpha_{i+\ell_s}(\C)$, proving \eqref{eq:remove-bound}. The inclusion $\C_s\le\D$ immediately gives \eqref{eq:add-lower}.

For \eqref{eq:add-upper}, choose an $i$-dimensional subcode $\E\le\D$ with $\dd(\E)=\alpha_i(\D)$. Since $\C_s$ has codimension $\ell_a$ in $\D$, the intersection $\E\cap\C_s$ has dimension $r\ge i-\ell_a>0$. Hence $\alpha_r(\C_s)\ge \dd(\E\cap\C_s)\ge \dd(\E)=\alpha_i(\D)$. Monotonicity gives $\alpha_{i-\ell_a}(\C_s)\ge\alpha_r(\C_s)$, proving \eqref{eq:add-upper}. The bounds \eqref{eq:general-lower} and \eqref{eq:general-upper} then follow from \eqref{eq:remove-bound}--\eqref{eq:add-upper} and the inclusion $\C_s\le\C$.
\end{proof}

The construction behaves well with respect to duality.

\begin{remark}[Duality]\label{prop:dual-addremove}
If $\D\in\addremove(\C,\ell_a,\ell_s)$, then $\D^\perp\in\addremove(\C^\perp,\ell_s,\ell_a)$. Indeed, $\C^\perp\cap\D^\perp=(\C+\D)^\perp$, and $\dim_{\Fq}(\C+\D)=\dim_{\Fq}(\C)+\ell_a$ because $\C\cap\D=\C_s$. Hence $\C^\perp\cap\D^\perp$ has codimension $\ell_a$ in $\C^\perp$ and codimension $\ell_s$ in $\D^\perp$. Moreover, $\D^\perp=\E\oplus\R$, where $\E:=\C^\perp\cap\A^\perp=\C^\perp\cap\D^\perp$ (as $\C^\perp\le\C_s^\perp$) and $\R$ is any complement of $\E$ in $\D^\perp$; then $\dim_{\Fq}(\R)=\ell_s$ and $\R\cap\C^\perp=\R\cap\E=\{0\}$.
\end{remark}

\begin{corollary}\label{cor:comparison}
Let $\D=\C_s\oplus\A\in\addremove(\C,\ell_a,\ell_s)$ and let $\ell_a < i
\le k-\ell_s$. If the subcode distances of $\C$ with indices in $[i-\ell_a,\, i+\ell_s]$ are all equal, that is,
\[
   \alpha_{i-\ell_a}(\C) = \alpha_{i-\ell_a+1}(\C)
   = \cdots = \alpha_{i+\ell_s}(\C) =: v,
\]
then $\alpha_i(\D) = v$. By Remark~\ref{prop:dual-addremove}, the same statement holds for $\D^\perp$ and $\C^\perp$ with $\ell_s$ and $\ell_a$ interchanged.
\end{corollary}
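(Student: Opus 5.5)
The plan is to sandwich $\alpha_i(\D)$ between the two general bounds of Proposition~\ref{prop:general}. Fix $\ell_a<i\le k-\ell_s$. Then $i$ satisfies $1\le i\le k-\ell_s$, so \eqref{eq:general-lower} gives
\[
\alpha_i(\D)\ge\alpha_{i+\ell_s}(\C)=v .
\]
Since $\ell_a<i\le k-\ell_s\le k-\ell_s+\ell_a$, the hypothesis of \eqref{eq:general-upper} is also satisfied, and it yields
\[
\alpha_i(\D)\le\alpha_{i-\ell_a}(\C)=v .
\]
Combining the two inequalities gives $\alpha_i(\D)=v$. The index ranges are valid because $i-\ell_a\ge1$ and $i+\ell_s\le k$, so both indices lie in $[1,k]$. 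Only the two endpoint values $\alpha_{i-\ell_a}(\C)$ and $\alpha_{i+\ell_s}(\C)$ are actually used. Equality of the whole range is then automatic from the monotonicity $\alpha_1(\C)\ge\cdots\ge\alpha_k(\C)$.

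For the dual statement, I would apply Remark~\ref{prop:dual-addremove}, which gives $\D^\perp\in\addremove(\C^\perp,\ell_s,\ell_a)$. Rerunning the argument above with $\C^\perp$ in place of $\C$, of dimension $mn-k$, and with the roles of $\ell_s$ and $\ell_a$ swapped, gives the claim: if $\ell_s<i\le mn-k-\ell_a$ and $\alpha_{i-\ell_s}(\C^\perp)=\cdots=\alpha_{i+\ell_a}(\C^\perp)=v$, then $\alpha_i(\D^\perp)=v$.

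There is no real obstacle here. The only point needing care is checking that $i$ lies in the index ranges of both \eqref{eq:general-lower} and \eqref{eq:general-upper}, and that the dual code's dimension is substituted correctly in the dual version.
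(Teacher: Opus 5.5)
Your proposal is correct and is the paper's argument: the paper's proof is exactly the sandwich $v=\alpha_{i+\ell_s}(\C)\le\alpha_i(\D)\le\alpha_{i-\ell_a}(\C)=v$ from \eqref{eq:general-lower} and \eqref{eq:general-upper}. Your index-range checks and the dual version (with $\dim_{\Fq}(\C^\perp)=mn-k$ and $\ell_s,\ell_a$ swapped) are also correct.
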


\begin{proof}
Inequalities \eqref{eq:general-lower} and \eqref{eq:general-upper} give
$v = \alpha_{i+\ell_s}(\C) \le \alpha_i(\D) \le \alpha_{i-\ell_a}(\C) = v$.
\end{proof}

\section{Add-and-Remove Gabidulin Codes}\label{sec:gab-addremove}

We now apply Proposition~\ref{prop:general} to Gabidulin codes, using Proposition~\ref{prop:gab-alpha}. Let $1\le\kappa\le n\le m$ and let $\C=\PsiG(\Gab(g,\kappa))$ be a matrix Gabidulin code of $\Fq$-dimension $m\kappa$. Let $\C_s\le\C$ have codimension $\ell_s$, let $\A\le\Fqmn$ have dimension $\ell_a$ with $\A\cap\C=\{0\}$, and set $\D=\C_s\oplus\A$. If $\ell_a>0$, then necessarily $\kappa<n$, since $\kappa=n$ would imply $\C=\Fqmn$.

\subsection{General bounds}
For $i\le\dim_{\Fq}(\C_s)$, we obtain upper and lower bounds on $\alpha_i(\D)$ that differ by at most one when $\ell_s\le m$.

\begin{theorem}\label{thm:gab-general}
For every $1\le i\le m\kappa-\ell_s$, we have
\begin{equation}\label{eq:gab-bounds}
 n-\left\lceil\frac{i+\ell_s}{m}\right\rceil+1
 \le \alpha_i(\D)\le
 n-\left\lceil\frac{i}{m}\right\rceil+1.
\end{equation}
Consequently, if
\begin{equation}\label{eq:same-ceil}
 \left\lceil\frac{i+\ell_s}{m}\right\rceil
 =\left\lceil\frac{i}{m}\right\rceil,
\end{equation}
then $\alpha_i(\D)=n-\left\lceil i/m\right\rceil+1$.
\end{theorem}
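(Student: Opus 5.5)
The plan is to combine the general inequalities of Proposition~\ref{prop:general} with the explicit formula of Proposition~\ref{prop:gab-alpha}. Here $\dim_{\Fq}(\C)=m\kappa$ and $\mu=m$, $\nu=n$. For the lower bound we use \eqref{eq:general-lower}. It is valid for $1\le i\le k-\ell_s=m\kappa-\ell_s$, which is exactly the range in the statement, and it gives $\alpha_i(\D)\ge\alpha_{i+\ell_s}(\C)$. Since $i+\ell_s\le m\kappa$, Proposition~\ref{prop:gab-alpha} applies and yields $\alpha_{i+\ell_s}(\C)=n-\lceil (i+\ell_s)/m\rceil+1$, which is the left-hand side of \eqref{eq:gab-bounds}.

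For the upper bound we do not use \eqref{eq:general-upper}. Its range requires $i>\ell_a$, and it would only give the weaker value $\alpha_{i-\ell_a}(\C)$. Instead we apply the Singleton-type bound \eqref{eq:alpha-sing} directly to $\D$, viewed as a code in $\Fqmn$ with $\mu=m$, $\nu=n$. Since $i\le m\kappa-\ell_s\le\dim_{\Fq}(\D)$, the index $i$ is admissible for $\D$, and \eqref{eq:alpha-sing} gives $\alpha_i(\D)\le n-\lceil i/m\rceil+1$, the right-hand side of \eqref{eq:gab-bounds}. The only care needed here is to check that $i$ lies in the valid range $1\le i\le\dim_{\Fq}(\D)=m\kappa-\ell_s+\ell_a$, which is immediate.

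Finally, under \eqref{eq:same-ceil} the two sides of \eqref{eq:gab-bounds} coincide, so $\alpha_i(\D)=n-\lceil i/m\rceil+1$. There is no real obstacle. The one point worth flagging is the choice of upper bound: the Singleton bound avoids the condition $i>\ell_a$ and gives a value independent of $\ell_a$. This is what makes the bounds differ by at most one when $\ell_s\le m$, because then $\lceil (i+\ell_s)/m\rceil\le\lceil i/m\rceil+1$.
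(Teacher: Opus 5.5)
Your proof is correct and is essentially the paper's argument: the lower bound comes from \eqref{eq:general-lower} combined with Proposition~\ref{prop:gab-alpha}, and the upper bound from the Singleton estimate \eqref{eq:alpha-sing} applied to $\D$ rather than from \eqref{eq:general-upper}. Your check of the index ranges $i+\ell_s\le m\kappa$ and $i\le\dim_{\Fq}(\D)$, and your remark on why the Singleton bound is preferable, are both accurate.
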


\begin{proof}
By \eqref{eq:general-lower} and \eqref{eq:gab-alpha}, $\alpha_i(\D)\ge\alpha_{i+\ell_s}(\C)=n-\left\lceil(i+\ell_s)/m\right\rceil+1$. The upper bound is the Singleton estimate \eqref{eq:alpha-sing} applied to the $i$-dimensional subcodes of $\D$. If \eqref{eq:same-ceil} holds, the two bounds coincide.
\end{proof}

Here \eqref{eq:general-upper} does not improve the upper bound in \eqref{eq:gab-bounds}, since the $\alpha_i(\C)$ attain the Singleton bound. Theorem~\ref{thm:gab-general} does not cover the indices $i>\dim_{\Fq}(\C_s)$, where $\alpha_i(\D)$ can be much smaller (Proposition~\ref{prop:min}).

Suppose now that $0\le\ell_s<m$, and write $i=(a-1)m+t$ with $1\le t\le m$. If $t\le m-\ell_s$, then $i+\ell_s\le am$, so \eqref{eq:same-ceil} holds. Hence, among the indices $(a-1)m+1,\ldots,am$, only the last $\ell_s$ ones can satisfy $\alpha_i(\D)\ne\alpha_i(\C)$, and the two bounds in \eqref{eq:gab-bounds} differ by at most one.

\begin{corollary}\label{cor:blocks}
Assume $0\le\ell_s<m$, and let $i=(a-1)m+t\le m\kappa-\ell_s$ with $1\le t\le m$. Then
\begin{enumerate}
 \item if $1\le t\le m-\ell_s$, then $ \alpha_i(\D)=n-a+1$;
 \item if $m-\ell_s<t\le m$, then
 $\alpha_i(\D)\in\{n-a+1,n-a\}$.
\end{enumerate}
In particular, if $\ell_s=0$, then $\alpha_i(\D)=n-\lceil i/m\rceil+1$ for all $1\le i\le m\kappa$.
\end{corollary}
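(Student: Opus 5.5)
The plan is to read both items off the two-sided bound \eqref{eq:gab-bounds} of Theorem~\ref{thm:gab-general}. This is legitimate because every index $i=(a-1)m+t\le m\kappa-\ell_s$ lies in the range where that theorem applies. The only work is to evaluate the two ceilings $\lceil i/m\rceil$ and $\lceil (i+\ell_s)/m\rceil$ for each range of $t$.

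\textbf{The upper bound.} Since $(a-1)m<i\le am$, we have $\lceil i/m\rceil=a$. So the upper bound in \eqref{eq:gab-bounds} is always $n-a+1$.

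\textbf{Item 1.} Suppose $1\le t\le m-\ell_s$. Then $(a-1)m<i+\ell_s\le am$, so $\lceil (i+\ell_s)/m\rceil=a$. Hence condition \eqref{eq:same-ceil} holds, and Theorem~\ref{thm:gab-general} gives $\alpha_i(\D)=n-a+1$.

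\textbf{Item 2.} Suppose $m-\ell_s<t\le m$. Then $am<i+\ell_s\le am+\ell_s<(a+1)m$, where the last inequality uses $\ell_s<m$. So $\lceil (i+\ell_s)/m\rceil=a+1$, and the lower bound in \eqref{eq:gab-bounds} is $n-a$. Since $\alpha_i(\D)$ is an integer squeezed between $n-a$ and $n-a+1$, it takes one of these two values.

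\textbf{The case $\ell_s=0$.} Every $t$ with $1\le t\le m$ falls under item 1. Hence $\alpha_i(\D)=n-\lceil i/m\rceil+1$ for all $1\le i\le m\kappa$, which is the full range since $m\kappa-\ell_s=m\kappa$.

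There is no real obstacle here. The only step needing care is the strict inequality $am+\ell_s<(a+1)m$ in item 2, which is exactly where the hypothesis $\ell_s<m$ enters.
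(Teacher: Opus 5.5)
Your proposal is correct and follows essentially the same route as the paper, which also reads both items directly off the two-sided bound \eqref{eq:gab-bounds} of Theorem~\ref{thm:gab-general} by computing $\lceil i/m\rceil=a$ and $\lceil (i+\ell_s)/m\rceil\in\{a,a+1\}$. You also state explicitly the step $am+\ell_s<(a+1)m$, which the paper leaves implicit in its remark that the two bounds differ by at most one when $\ell_s<m$.
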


Both values in case 2) occur; see Example~\ref{ex:f8}.
Since $(\alpha_i(\D))_i$ is nonincreasing, Corollary~\ref{cor:blocks} also shows
where a decrease can occur. For $\C$, such a decrease occurs exactly
when $j$ is a multiple of $m$. For $\D$, if $\alpha_j(\D)=n-a+1$ and $\alpha_{j+1}(\D)=n-a$ with $j+1\le m\kappa-\ell_s$, then $am-\ell_s\le j\le am$. For $\ell_s=1$, this means $j\in\{am-1,am\}$.

For $i=am<m\kappa$, we have $\alpha_{am}(\D)=n-a+1$ if and only if $\D$ contains an MRD subcode of dimension $am$, since the Singleton bound for dimension $am$ is $n-a+1$.

\subsection{The minimum distance}

The added space $\A$ may contain matrices of small rank, or produce them together with $\C_s$; the minimum distance of $\D$ therefore depends on $\A$.

\begin{proposition}\label{prop:min}
Assume $0\le\ell_s<m$. Then
\begin{equation}\label{eq:min}
 \dd(\D)=\min\{n-\kappa+1,\Delta(\A,\C_s)\}.
\end{equation}
If $\ell_a=\ell_s$, then $\D$ has dimension $m\kappa$ and is MRD if and only if
\begin{equation}\label{eq:mrd}
 \Delta(\A,\C_s)\ge n-\kappa+1.
\end{equation}
\end{proposition}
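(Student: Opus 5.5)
Every nonzero element of $\D=\C_s\oplus\A$ either lies in $\C_s$ or lies in $\D\setminus\C_s$. The minimum rank over $\D\setminus\C_s$ is exactly $\Delta(\A,\C_s)$, by the reformulation in the preliminaries: it is the minimum rank of an element of $(\A\oplus\C_s)\setminus\C_s$. This covers the convention $\Delta=+\infty$ when $\ell_a=0$. Hence
\[
\dd(\D)=\min\{\dd(\C_s),\Delta(\A,\C_s)\},
\]
and the whole statement reduces to proving $\dd(\C_s)=n-\kappa+1$ when $0\le\ell_s<m$.

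For that equality, the lower bound is immediate: $\C_s\le\C$, so $\dd(\C_s)\ge\dd(\C)=n-\kappa+1$. For the upper bound, I would apply the Singleton bound \eqref{eq:singleton-distance} to $\C_s$. Here $\dim_{\Fq}(\C_s)=m\kappa-\ell_s$ with $0\le\ell_s<m$, so $\lceil(m\kappa-\ell_s)/m\rceil=\kappa$, provided $\C_s\ne\{0\}$, which holds because $\kappa\ge1$. This gives $\dd(\C_s)\le n-\kappa+1$. Equivalently, Theorem~\ref{thm:gab-general} at $i=m\kappa-\ell_s$ gives $\alpha_{m\kappa-\ell_s}(\C_s)$ pinned to this value, since condition \eqref{eq:same-ceil} holds there. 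This is the one place the hypothesis $\ell_s<m$ enters; without it $\C_s$ could have larger distance.

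For the MRD claim with $\ell_a=\ell_s$, the dimension count gives $\dim\D=m\kappa-\ell_s+\ell_a=m\kappa$. The Singleton bound for dimension $m\kappa$ with $\mu=m$ and $\nu=n$ is $n-\kappa+1$. So $\D$ is MRD if and only if $\dd(\D)=n-\kappa+1$. By \eqref{eq:min}, this holds if and only if $\Delta(\A,\C_s)\ge n-\kappa+1$.

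I do not expect a real obstacle. The only points needing care are the degenerate cases: $\ell_a=0$, handled by the convention $\Delta=+\infty$, and confirming that $\C_s$ is nonzero so the Singleton bound applies.
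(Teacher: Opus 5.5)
Your proposal is correct and follows the paper's proof essentially step for step. You split $\dd(\D)$ as $\min\{\dd(\C_s),\Delta(\A,\C_s)\}$ and pin $\dd(\C_s)=n-\kappa+1$ using $\C_s\le\C$ together with the Singleton bound, which applies since $\lceil(m\kappa-\ell_s)/m\rceil=\kappa$. The MRD equivalence then follows from the dimension count $m\kappa$, exactly as in the paper.
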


\begin{proof}
Every nonzero element of $\D=\C_s\oplus\A$ belongs either to $\C_s\setminus\{0\}$ or to a coset $X+\C_s$ with $0\ne X\in\A$. Consequently,
\begin{equation}\label{eq:min-decomposition}
 \dd(\D)=\min\{\dd(\C_s),\Delta(\A,\C_s)\}.
\end{equation}
Since $\C_s\le\C$ and $\dd(\C)=n-\kappa+1$, one has $\dd(\C_s)\ge n-\kappa+1$. On the other hand, $\dim_{\Fq}(\C_s)=m\kappa-\ell_s$ and $0\le\ell_s<m$, so $\lceil(m\kappa-\ell_s)/m\rceil=\kappa$. The Singleton bound gives the reverse inequality, hence $\dd(\C_s)=n-\kappa+1$, and \eqref{eq:min} follows from \eqref{eq:min-decomposition}. If $\ell_a=\ell_s$, then $\dim_{\Fq}(\D)=m\kappa$. Such a code is MRD exactly when $\dd(\D)=n-\kappa+1$, which by \eqref{eq:min} is equivalent to \eqref{eq:mrd}.
\end{proof}

Consequently, an Add-and-Remove operation need not preserve the MRD property; see Example~\ref{ex:f8}.

\subsection{Removing one \texorpdfstring{$\Fqm$}{F\_\{q\textasciicircum m\}}-dimension}

We next remove one whole $\Fqm$-dimension, choosing the remaining subcode to be Gabidulin.

\begin{proposition}\label{prop:fqm-removal}
Assume $2\le\kappa<n$, let $\C_0=\PsiG(\Gab(g,\kappa-1))\le\C$, and let $\A\le\Fqmn$ have dimension $m$ with $\A\cap\C=\{0\}$. Set $\D=\C_0\oplus\A$. Then, for every $1\le i\le m(\kappa-1)$,
\begin{equation}\label{eq:fqm-case}
 \alpha_i(\D)=n-\left\lceil\frac{i}{m}\right\rceil+1.
\end{equation}
Moreover,
\begin{equation}\label{eq:fqm-d}
 \dd(\D)=\Delta(\A,\C_0)\le n-\kappa+1.
\end{equation}
In particular, $\D$ is MRD if and only if $\Delta(\A,\C_0)=n-\kappa+1$.
\end{proposition}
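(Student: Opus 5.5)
For the formula \eqref{eq:fqm-case}, I would sandwich $\alpha_i(\D)$ between matching bounds. Fix $1\le i\le m(\kappa-1)$ and set $a:=\lceil i/m\rceil\le\kappa-1$.

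\textbf{Lower bound.} Since $\C_0\le\D$, inequality \eqref{eq:add-lower} (or simply the inclusion) gives $\alpha_i(\D)\ge\alpha_i(\C_0)$. Now $\C_0$ is itself a matrix Gabidulin code $\PsiG(\Gab(g,\kappa-1))$, so Proposition~\ref{prop:gab-alpha} applied with $\kappa-1$ in place of $\kappa$ yields $\alpha_i(\C_0)=n-a+1$. Concretely, the Gabidulin subcode $\C_a\le\C_0$ has distance $n-a+1$ by \eqref{eq:gab-subcode}.

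\textbf{Upper bound.} The Singleton estimate \eqref{eq:alpha-sing} gives $\alpha_i(\D)\le n-a+1$, so equality holds. This part is straightforward. Note that Theorem~\ref{thm:gab-general}, with $\ell_s=m$, would only give a gap of one. The point here is that choosing $\C_s$ Gabidulin removes that gap.

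For \eqref{eq:fqm-d}, I would argue as in the proof of Proposition~\ref{prop:min}. Every nonzero element of $\D=\C_0\oplus\A$ lies in $\C_0\setminus\{0\}$ or in a coset $X+\C_0$ with $0\ne X\in\A$, which gives
\[
\dd(\D)=\min\{\dd(\C_0),\Delta(\A,\C_0)\}=\min\{n-\kappa+2,\Delta(\A,\C_0)\}.
\]
It then remains to show $\Delta(\A,\C_0)\le n-\kappa+1$, so that the minimum is $\Delta$. Here $\D$ has dimension $m(\kappa-1)+m=m\kappa$, so the Singleton bound \eqref{eq:singleton-distance} gives $\dd(\D)\le n-\kappa+1<n-\kappa+2=\dd(\C_0)$. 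Hence the minimum is attained by $\Delta(\A,\C_0)$, and this also shows $\Delta(\A,\C_0)\le n-\kappa+1$. The hypothesis $\kappa<n$ guarantees that $\D\le\Fqmn$ is possible (and $\A\cap\C=\{0\}$ ensures the sum is direct). The hypothesis $\kappa\ge2$ makes $\C_0$ nonzero, so that $\Gab(g,\kappa-1)$ is defined.

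Finally, since $\dim\D=m\kappa$, $\D$ is MRD if and only if $\dd(\D)=n-\kappa+1$, which by \eqref{eq:fqm-d} means $\Delta(\A,\C_0)=n-\kappa+1$. The only mildly delicate step is the comparison $\dd(\C_0)=n-\kappa+2>n-\kappa+1$, which forces the minimum in the decomposition to come from $\Delta$. No step appears to be a genuine obstacle.
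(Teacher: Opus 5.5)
Your proposal is correct and follows essentially the same route as the paper. The formula \eqref{eq:fqm-case} comes from the Gabidulin subcodes $\C_a\le\C_0\le\D$ (Proposition~\ref{prop:gab-alpha} applied to $\C_0$) together with the Singleton estimate \eqref{eq:alpha-sing}. The equality \eqref{eq:fqm-d} comes from the coset decomposition $\dd(\D)=\min\{n-\kappa+2,\Delta(\A,\C_0)\}$, combined with the Singleton bound $\dd(\D)\le n-\kappa+1$ for the $m\kappa$-dimensional code $\D$.
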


\begin{proof}
For $1\le i\le m(\kappa-1)$, the code $\C_0$ contains the nested Gabidulin subcodes $\C_a$ for $a\le\kappa-1$. The proof of Proposition~\ref{prop:gab-alpha} therefore gives a subcode of $\C_0\le\D$ attaining the bound \eqref{eq:alpha-sing} at every such index, proving \eqref{eq:fqm-case}.
The code $\C_0$ has minimum distance $n-\kappa+2$, while $\D$ has dimension $m\kappa$. As in \eqref{eq:min-decomposition}, $\dd(\D)=\min\{n-\kappa+2,\Delta(\A,\C_0)\}$. The Singleton bound for $\D$ gives $\dd(\D)\le n-\kappa+1$. Therefore $\Delta(\A,\C_0)\le n-\kappa+1$, so the minimum is attained by $\Delta(\A,\C_0)$, proving \eqref{eq:fqm-d}. The MRD characterization follows immediately.
\end{proof}

Thus, when $\C_s$ is itself a Gabidulin code, all subcode distances $\alpha_i(\D)$ with $i\le\dim_{\Fq}(\C_s)$ attain the Singleton bound, whereas for other subcodes $\C_s$ of small codimension $\ell_s<m$ some of them can be smaller by one (Corollary~\ref{cor:blocks} and Example~\ref{ex:f8}). In all cases, the minimum distance of $\D$ depends on $\A$.

\section{Application to MIRANDA and an Example}\label{sec:crypto}

We consider the version of $\miranda$ based on Gabidulin codes~\cite{CouvreurDGV26}. There, $q=2$, $m=n$, and the secret key contains a matrix Gabidulin code $\C=\PsiG(\Gab(g,\kappa))$ of $\Fq$-dimension $\kappa m$. One chooses a subcode $\C_s\le\C$ of codimension $\ell_s$ and a code $\A$ of dimension $\ell_a$ with $\A\cap\C=\{0\}$, and sets $\D_{\mathrm{mat}}:=\C_s\oplus\A$. The public key is a basis of $\D_{\mathrm{mat}}^\perp$, which determines $\D_{\mathrm{mat}}$. For all signature parameters of~\cite[Table~1]{CouvreurDGV26}, one has $2\le\ell_s\le5$ and $m\ge79$, so $\ell_s<m$ and the results of Section~\ref{sec:gab-addremove} apply to $\D_{\mathrm{mat}}$. We will illustrate the results with the parameter set $(m,n,\kappa,\ell_a,\ell_s)=(113,113,107,220,3)$ of~\cite[Table~1]{CouvreurDGV26}, which targets $128$-bit security (NIST security level~I); then $\dim_{\Fq}(\C_s)=12088$ and $\dim_{\Fq}(\D_{\mathrm{mat}})=12308$.

We first consider the subcode distances of $\D_{\mathrm{mat}}$. Write $i=(a-1)m+t$ with $1\le t\le m$ and $i\le\kappa m-\ell_s$. By Corollary~\ref{cor:blocks},
\begin{align*}
 \alpha_i(\D_{\mathrm{mat}})&=n-a+1, && t\le m-\ell_s,\\
 \alpha_i(\D_{\mathrm{mat}})&\in\{n-a+1,n-a\}, && t>m-\ell_s.
\end{align*}
Thus, among any $m$ consecutive indices, at most $\ell_s\le5$ subcode distances of $\D_{\mathrm{mat}}$ are not determined by the parameters. For the parameter set above, $\alpha_i(\D_{\mathrm{mat}})=114-\lceil i/113\rceil$ for all $i\le12088$ with $t\le110$. In addition, for every $1\le a\le\kappa$, the code $\C_a\cap\C_s$ is a subcode of $\D_{\mathrm{mat}}$ of dimension between $am-\ell_s$ and $am$ and minimum distance at least $n-a+1$; since $\ell_s<m$, this is the Singleton bound for that dimension.

We now turn to the dual public code. By Remark~\ref{prop:dual-addremove}, $\D_{\mathrm{mat}}^\perp=\E\oplus\R$, where $\E:=\C^\perp\cap\D_{\mathrm{mat}}^\perp$ is a subcode of codimension $\ell_a$ of $\C^\perp$ and $\dim_{\Fq}(\R)=\ell_s$; this is the decomposition used in~\cite[Sec.~5.3]{CouvreurDGV26}.
The code $\C^\perp$ is again an $\Fqm$-linear matrix Gabidulin code, of $\Fqm$-dimension $n-\kappa$ and minimum distance $\kappa+1$: indeed, $\C^\perp=\Psi_{\gamma^*}(\Gab(h,n-\kappa))$ for some $h\in\Fqm^n$, where $\gamma^*$ is the trace-dual basis of $\gamma$, see e.g.~\cite{Gabidulin85,Ravagnani16dual}. Note that $\C^\perp$ is $\Fqm$-linear with respect to $\gamma^*$, but in general not with respect to $\gamma$. Since Proposition~\ref{prop:gab-alpha}, and hence Theorem~\ref{thm:gab-general}, holds for any basis, Theorem~\ref{thm:gab-general} applies to $\D_{\mathrm{mat}}^\perp\in\addremove(\C^\perp,\ell_s,\ell_a)$ and gives, for $1\le j\le m(n-\kappa)-\ell_a$,
\begin{equation}\label{eq:miranda-dual-profile}
 n-\left\lceil\frac{j+\ell_a}{m}\right\rceil+1
 \le \alpha_j(\D_{\mathrm{mat}}^\perp)
 \le n-\left\lceil\frac{j}{m}\right\rceil+1.
\end{equation}
For the largest index $j=\dim_{\Fq}(\E)=m(n-\kappa)-\ell_a$, the lower bound reads
\begin{equation}\label{eq:miranda-dual-gap}
 \alpha_{m(n-\kappa)-\ell_a}(\D_{\mathrm{mat}}^\perp)\ge\kappa+1 ,
\end{equation}
which also follows directly from $\dd(\E)\ge\dd(\C^\perp)=\kappa+1$.
If $\ell_s=0$, then $\D_{\mathrm{mat}}^\perp=\E$ and \eqref{eq:miranda-dual-gap} says that $\dd(\D_{\mathrm{mat}}^\perp)\ge\kappa+1$, i.e., $\D_{\mathrm{mat}}^\perp$ contains no nonzero matrix of rank at most $\kappa$, whereas a random code of the same dimension does with high probability; see~\eqref{eq:chebyshev} below. This is the property used by the exponential-time distinguisher of~\cite[Sec.~5.2]{CouvreurDGV26}, which is described there for $\ell_s=0$. If $\ell_s>0$, then \eqref{eq:miranda-dual-gap} says that $\D_{\mathrm{mat}}^\perp$ contains a subcode of codimension $\ell_s$, namely $\E$, with no nonzero matrix of rank at most $\kappa$.

Unlike the bounds for $\D_{\mathrm{mat}}$, \eqref{eq:miranda-dual-profile} determines no value $\alpha_j(\D_{\mathrm{mat}}^\perp)$ when $\ell_a\ge m$, since then its two sides always differ. This is the case for all sets of~\cite[Table~1]{CouvreurDGV26} except the one with $m=233$; for instance, $\ell_a=220>113$ for the parameter set above.
We now compare these values with those of other codes. A first comparison, valid for every code of a given dimension, follows from Theorem~\ref{thm:gab-general} itself.

\begin{proposition}\label{prop:any-code}
Let $m\ge n$ and let $\U\le\Fqmn$ be any code of codimension $c$. Then, for $1\le i\le mn-c$,
\[
 n-\left\lceil\frac{i+c}{m}\right\rceil+1\le\alpha_i(\U)\le n-\left\lceil\frac{i}{m}\right\rceil+1.
\]
\end{proposition}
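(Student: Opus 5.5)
The plan is to realize $\U$ as an Add-and-Remove code obtained from the full matrix space, and then apply Theorem~\ref{thm:gab-general}. Take $\kappa=n$ and any $g\in\Fqm^n$ with $\Fq$-linearly independent entries. Then $\C:=\PsiG(\Gab(g,n))$ has $\Fq$-dimension $mn$, so $\C=\Fqmn$.

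Set $\C_s:=\U$, which has codimension $\ell_s=c$ in $\C$, and take $\A=\{0\}$, so $\ell_a=0$ and $\A\cap\C=\{0\}$ trivially. The definition of the Add-and-Remove construction gives $\D=\C_s\oplus\A=\U\in\addremove(\C,0,c)$. Theorem~\ref{thm:gab-general} applies for every $1\le i\le m\kappa-\ell_s=mn-c$. With $\kappa=n$ and $\ell_s=c$, the bounds \eqref{eq:gab-bounds} are exactly the claimed inequalities for $\alpha_i(\U)$.

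The only point to check is that Theorem~\ref{thm:gab-general} does not secretly need $\kappa<n$ or $\ell_a>0$. The remark before Section~4.1 says $\ell_a>0$ forces $\kappa<n$, but the case $\ell_a=0$, $\kappa=n$ is still allowed. The proof of the theorem uses only two ingredients. The first is \eqref{eq:general-lower}, whose proof needs only the codimension of $\C_s$ in $\C$. The second is \eqref{eq:gab-alpha} for $\C$, whose proof needs only the Gabidulin chain $\C_a$ for $a\le\kappa$. Both hold for $\kappa=n$.

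For a self-contained check, the two bounds can also be verified directly. The upper bound is the Singleton estimate \eqref{eq:alpha-sing}. For the lower bound, let $a=\lceil (i+c)/m\rceil$; if $a>n$ the bound is $\le 0$ and trivial, so assume $a\le n$. The Gabidulin subcode $\C_a$ has dimension $am\ge i+c$. Its intersection with $\U$ has dimension at least $am-c\ge i$. It therefore contains an $i$-dimensional subcode of distance at least $n-a+1$, by \eqref{eq:gab-subcode}.
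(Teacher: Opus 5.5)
Your proposal is correct and follows the paper's own proof: you use $\PsiG(\Gab(g,n))=\Fqmn$ to view $\U$ as an element of $\addremove(\Fqmn,0,c)$, then apply Theorem~\ref{thm:gab-general} with $\kappa=n$, $\ell_s=c$, $\ell_a=0$. Your direct check, intersecting $\U$ with the Gabidulin subcode $\C_a$, is also valid, and the case $a>n$ you guard against never arises, since $i+c\le mn$.
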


\begin{proof}
For every $g\in\Fqm^n$ with $\Fq$-linearly independent entries, $\PsiG(\Gab(g,n))=\Fqmn$, and $\U\in\addremove(\Fqmn,0,c)$. Apply Theorem~\ref{thm:gab-general} with $\kappa=n$, $\ell_s=c$ and $\ell_a=0$.
\end{proof}

A code of the same dimension as $\D_{\mathrm{mat}}$ has codimension $c=m(n-\kappa)-\ell_a+\ell_s$. Hence every such code contains, for each $a$, a subcode of dimension at least $am-c$ and minimum distance at least $n-a+1$ (its intersection with an MRD code of dimension $am$), whereas $\D_{\mathrm{mat}}$ contains such a subcode of dimension at least $am-\ell_s$. The difference $c-\ell_s=m(n-\kappa)-\ell_a$ is exactly $\dim_{\Fq}(\E)$. For the parameter set above, $c=461$, so every code $\U$ of dimension $12308$ satisfies $\alpha_i(\U)\ge114-\lceil(i+461)/113\rceil$, whereas $\alpha_i(\D_{\mathrm{mat}})\ge114-\lceil (i+3)/113\rceil$ for $i\le12088$. Thus the subcode distances of $\D_{\mathrm{mat}}$ can distinguish it from a random code only if, for some $i\le12088$, a random code $\U$ of this dimension satisfies $\alpha_i(\U)<114-\lceil (i+3)/113\rceil$ with high probability, that is, $\alpha_i(\U)$ is smaller than the lower bound on $\alpha_i(\D_{\mathrm{mat}})$; we do not know whether this is the case.

For the dual code, Proposition~\ref{prop:any-code} gives no information, since $\dim_{\Fq}(\D_{\mathrm{mat}}^\perp)$ is small. But, we can use a second-moment argument. Let $\U$ be a uniformly random $K$-dimensional code, let $B_r$ be the set of nonzero matrices of rank at most $r$ in $\Fqmn$, and let $X:=|\U\cap B_r|$. Then $\mathbb{E}[X]=|B_r|(q^K-1)/(q^{mn}-1)$. Two linearly independent matrices lie in $\U$ with probability at most the product of the individual probabilities, and $B_r$ is closed under multiplication by $\Fq^*$; hence $\operatorname{Var}(X)\le(q-1)\,\mathbb{E}[X]$, and Chebyshev's inequality gives
\begin{equation}\label{eq:chebyshev}
 \Pr[X=0]\le\frac{\operatorname{Var}(X)}{\mathbb{E}[X]^2}\le\frac{q-1}{\mathbb{E}[X]}.
\end{equation}

For the parameter set above, $r=\kappa=107$ and $K=\dim_{\Fq}(\D_{\mathrm{mat}}^\perp)=461$ give $\mathbb{E}[X]\approx2^{426.7}$, so a random code of this dimension contains a nonzero matrix of rank at most $\kappa$, except with probability at most $2^{-426}$. If $\ell_s=0$, then $\D_{\mathrm{mat}}^\perp$ contains no such matrix by \eqref{eq:miranda-dual-gap}. If $\ell_s>0$, only the subcode $\E$ of codimension $\ell_s$ is guaranteed to contain none. Whether a random code of dimension $K$ has a subcode of codimension $\ell_s$ without nonzero matrices of rank at most $\kappa$ is not decided by this argument: a union bound over its subcodes of codimension $\ell_s$ multiplies the above probability by $\genfrac{[}{]}{0pt}{2}{K}{\ell_s}_q\approx q^{\ell_s(K-\ell_s)}$, which is too large here. We leave this question open.

These results do not give a new attack: a distinguisher would also need the distribution of subcode distances of random matrix codes and an efficient way to compute them. Nor do they give the left stabilizer algebra of the hidden Gabidulin code, which the key-recovery attack of~\cite[Sec.~5.3]{CouvreurDGV26} recovers before applying classical attacks~\cite{Overbeck08}.

\begin{example}\label{ex:f8}
Let $\F_8=\F_2[\theta]/(\theta^3+\theta+1)$, take $\gamma=(1,\theta,\theta^2)$ and $g=(1,\theta,\theta^2)$, and set $\C:=\PsiG(\Gab(g,2))$. Then $\C\le\F_2^{3\times3}$ has dimension $6$ and minimum distance $2$, and Proposition~\ref{prop:gab-alpha} gives $(\alpha_1(\C),\ldots,\alpha_6(\C))=(3,3,3,2,2,2)$.

For a linearized polynomial $f$, write $\operatorname{ev}_g(f):=(f(1),f(\theta),f(\theta^2))$. Let $\mathcal L_0:=\{a_0x+a_1x^2:\ a_0\in\F_8,\ a_1\in\langle1,\theta\rangle_{\F_2}\}$ and $\C_0:=\PsiG(\operatorname{ev}_g(\mathcal L_0))$. Then $\C_0$ is a $5$-dimensional $\F_2$-hyperplane of $\C$ and contains the $3$-dimensional matrix Gabidulin subcode $\PsiG(\Gab(g,1))$, whose minimum distance is $3$. Let $A=E_{11}$, whose inverse image under $\PsiG$ is $(1,0,0)$. If $A\in\C$, some $f(x)=a_0x+a_1x^2$ with $a_0,a_1\in\F_8$ would satisfy $f(\theta)=f(\theta^2)=0$ and $f(1)=1$. This is impossible, because such an $f$ is nonzero, and a nonzero linearized polynomial of $q$-degree at most one (here $q=2$) has $\F_2$-kernel of dimension at most one, while $\theta,\theta^2$ are $\F_2$-linearly independent. Thus $A\notin\C$.
Set $\D=\C_0\oplus\langle A\rangle_{\F_2}\in\addremove(\C,1,1)$. Since $A$ has rank one, $\dd(\D)=1$ by Proposition~\ref{prop:min}. Corollary~\ref{cor:blocks} with $\ell_s=1$ gives $\alpha_1=\alpha_2=3$ and $\alpha_4=\alpha_5=2$, while the $3$-dimensional MRD subcode in $\C_0$ gives $\alpha_3=3$. Hence $(\alpha_1(\D),\ldots,\alpha_6(\D))=(3,3,3,2,2,1)$. Here the index $i=3$, the only index $i<6$ at which Corollary~\ref{cor:blocks} allows a decrease, keeps the value of $\C$.

This depends on the hyperplane. Write $b_0$ for the coefficient of $1$ of $b\in\F_8$ in the basis $\gamma$, and let $\C_0'$ be the image of $\{a_0x+a_1x^2:\ a_0,a_1\in\F_8,\ (a_0)_0=(a_1)_0\}$. An exhaustive computer search shows that $\C_0'$ contains no $3$-dimensional subcode of minimum distance $3$, and that $\alpha_3(\C_0'\oplus\langle A'\rangle_{\F_2})=2$ for every $A'\in\F_2^{3\times3}\setminus\C$. For $A'=E_{11}$, $\D'=\C_0'\oplus\langle E_{11}\rangle_{\F_2}\in\addremove(\C,1,1)$ has $(\alpha_1(\D'),\ldots,\alpha_6(\D'))=(3,3,2,2,2,1)$, so the smaller value in case 2) of Corollary~\ref{cor:blocks} is attained.
\end{example}

\section{Conclusion}
We studied how subcode distances change under the Add-and-Remove construction. For a matrix Gabidulin code $\C$ and $\ell_s<m$, the subcode distances $\alpha_i(\D)$ with $i\le\dim_{\Fq}(\C_s)$ coincide with those of $\C$ except at the last $\ell_s$ indices of each block of length $m$, where they can decrease by one, and Example~\ref{ex:f8} shows that this can happen. For $\miranda$, the lower bound on the minimum distance of the dual public code used in the distinguisher of~\cite{CouvreurDGV26} (described there for $\ell_s=0$) is a special case of our bounds; for $\ell_s>0$, the same bound holds on a subcode of codimension $\ell_s$. Using the other subcode distances in a distinguisher is left for future work.

\section*{Declaration on the use of AI}
During the preparation of this manuscript, the authors used GPT-5.6 Luna to refine the text, verify the calculations, and find relevant literature. The authors explicitly declare that no mathematical ideas were generated by AI, and the authors independently verified all the theorem statements and proofs. The authors take full responsibility of the final content.

\bibliographystyle{abbrv}
\bibliography{references}

\end{document}